\pdfoutput=1                     
\documentclass[a4paper,11pt]{article}

\usepackage[utf8]{inputenc}
\usepackage[T1]{fontenc}
\usepackage[english]{babel}
\usepackage{amsmath,amssymb,amsthm}
\usepackage{microtype}   
\usepackage[margin=1in]{geometry}
\usepackage[colorlinks=true,linkcolor=blue,citecolor=blue,urlcolor=blue]{hyperref}

\newtheorem{theorem}{Theorem}[section]
\newtheorem{lemma}[theorem]{Lemma}
\newtheorem{proposition}[theorem]{Proposition}
\newtheorem{corollary}[theorem]{Corollary}
\newtheorem{question}[theorem]{Question}
\theoremstyle{definition}
\newtheorem{definition}[theorem]{Definition}
\newtheorem{example}[theorem]{Example}
\theoremstyle{remark}
\newtheorem{remark}[theorem]{Remark}

\newcommand{\F}{\mathbb{F}}
\newcommand{\Z}{\mathbb{Z}}
\newcommand{\A}{A}
\newcommand{\wt}{\operatorname{wt}}
\newcommand{\tr}{\operatorname{tr}}
\newcommand{\Sp}{\operatorname{Sp}}
\newcommand{\SL}{\operatorname{SL}}
\newcommand{\GL}{\operatorname{GL}}
\newcommand{\Hom}{\operatorname{Hom}}
\newcommand{\Aut}{\operatorname{Aut}}
\newcommand{\ket}[1]{|#1\rangle}

\title{Minimal Counterexamples of the MacWilliams Extension \\Theorem
       for Stabilizer Codes}

\author{%
  Ali Assem Mahmoud\\[4pt]
  \normalsize Digital Technologies Research Centre, National Research Council of Canada,\\
  \normalsize 1200 Montreal Road, Ottawa, ON K1A 0R6, Canada\\[2pt]
  \normalsize Institute for Quantum Computing, University of Waterloo,\\
  \normalsize 200 University Avenue West, Waterloo, ON N2L 3G1, Canada\\[4pt]
  \normalsize\texttt{ali.mahmoud@uwaterloo.ca}\quad\textbullet\quad
  \normalsize\texttt{ali.mahmoud@nrc-cnrc.gc.ca}%
}

\date{\today}

\begin{document}

\maketitle

\begin{abstract}

The MacWilliams extension theorem fails for module alphabets with non-cyclic
socle, and the label alphabet of qudit stabilizer codes, $\F_{q^2}$ over
$\F_q$, is such an alphabet. Quantum error correction, however, only ever sees
\emph{self-orthogonal} additive codes, and whether that rigidity rescues the
theorem---equivalently, whether every weight-preserving isomorphism of
stabilizer groups is implemented by local Cliffords and a qudit
permutation---was asked by Gluesing-Luerssen and Pllaha and answered
negatively by Pllaha for particular qubit codes. We develop the negative
answer systematically and at the smallest possible scales. For every prime
power $q$ we construct a pair of $[[q+1,q-1]]_q$ stabilizer codes and a
weight-preserving isomorphism between them extending to no monomial
transformation; the codespaces are inequivalent even under arbitrary local
unitaries combined with permutations, though they share Shor--Laflamme
enumerators. Self-orthogonality is automatic here, by two elementary lemmas
which also show that Dyshko's threshold-length counterexamples were already
self-orthogonal, unremarked. For qubits we prove by exhaustive search that
length $3$ is minimal and the counterexample essentially unique. Dropping the
``idle qudit'' invariant that detects these, we find the minimal full-support
lengths: $4$ for a non-extendable isometry, $5$ for a weight-isometric pair
that is not monomially equivalent, realized by explicit $[[5,2]]$ codes; at
length $6$ all nontrivial stabilizer elements can have weight $\ge 4$. Whether
these codespaces are locally unitarily equivalent is posed as an open problem,
connecting the extension problem to the LU--LC circle of questions.
\end{abstract}

\noindent\textbf{Keywords:} MacWilliams extension theorem; stabilizer codes; additive codes; self-orthogonal codes; local Clifford equivalence; cyclic socle; LU--LC problem.

\medskip
\noindent\textbf{MSC 2020:} 94B05, 81P70, 16D10, 20C20.

\section{Introduction}

\subsection{The classical extension theorem}

Two linear codes $C,D\le\F_q^n$ are \emph{monomially equivalent} if some permutation of coordinates followed by multiplication of each coordinate by a nonzero scalar maps $C$ onto $D$. MacWilliams proved in her thesis \cite{MacWilliams62} that monomial maps are the only source of Hamming-weight-preserving linear isomorphisms between linear codes over a finite field: every linear bijection $f\colon C\to D$ with $\wt(f(c))=\wt(c)$ for all $c\in C$ extends to a monomial transformation of $\F_q^n$. This \emph{extension theorem} is a cornerstone of the theory of code equivalence, complementing the (logically independent) MacWilliams identities relating the weight enumerators of a code and its dual.

A long line of work asked for which alphabets the extension theorem survives when $\F_q$ is replaced by a finite ring or, more generally, by a finite module $\A$ over a finite ring $R$ (``codes over $\A$'' are $R$-submodules of $\A^n$; the allowed monomial maps use $\Aut_R(\A)$ in each coordinate). Wood proved that finite Frobenius rings have the extension property and, conversely, that the extension property for Hamming weight characterizes them \cite{Wood99,Wood08}; for module alphabets the definitive statement is that $\A$ has the extension property with respect to Hamming weight if and only if $\A$ is pseudo-injective with cyclic socle \cite{Wood09,DinhLopez04}. Refinements to symmetrized weight compositions and other weights, and to the socle condition, appear among other places in \cite{GreferathSchmidt00,AssemJAA,Assem15,Dyshko16,DyshkoMDS,DyshkoGeom}. In particular, \cite{AssemJAA} (see also the M.Sc.\ thesis \cite{Assem15}) proves an extension theorem for symmetrized weight compositions for modules with a suitable socle condition, again with cyclicity of the socle as the decisive hypothesis, and works out the instructive example of a field pair $K\subsetneq L$: the alphabet $L$, viewed as a $K$-module, never has the extension property, the smallest instance being $K=\F_2\subset L=\F_4$.

That smallest instance is not a curiosity. Additive codes over $\F_4$---$\F_2$-linear subspaces of $\F_4^n$---are exactly the classical shadows of qubit stabilizer codes in the Calderbank--Rains--Shor--Sloane (CRSS) correspondence \cite{CRSS98}, and the same statement holds for qudits with $\F_{q^2}$ (equivalently the symplectic label space $\F_q^2$) in place of $\F_4$ \cite{Ketkar06}. The alphabet of quantum error correction is therefore an alphabet whose socle is as non-cyclic as possible: $\F_q^2$ is semisimple, equal to its own socle, and needs two generators. Wood's characterization thus already tells us that additive isometries need not extend, and explicit unextendable isometries of additive codes, including a family at the threshold length $q+1$, were constructed by Dyshko \cite{Dyshko16,DyshkoGeom}.

\subsection{The quantum question}\label{sec:quantumquestion}

Quantum error correction, however, never sees arbitrary additive codes. A stabilizer code is described by a \emph{self-orthogonal} additive code (with respect to the trace-Hermitian, equivalently trace-symplectic, form): this is the commutativity of the stabilizer group. The natural question is therefore:

\begin{quote}
Does the MacWilliams extension theorem hold for stabilizer codes? That is, does every weight-preserving isomorphism between self-orthogonal additive codes extend to a monomial map---equivalently, is every weight-preserving isomorphism of stabilizer groups induced by local Clifford operations and a permutation of the qudits?
\end{quote}

Restricting the class of codes weakens the hypothesis of the extension property, so failure for general additive codes does not formally imply failure for self-orthogonal ones; conceivably the rigid geometry of self-orthogonality could have restored extendability, as linearity does (Hermitian $\F_4$-linear self-orthogonal codes inherit the classical extension theorem for $\F_4$-linear isometries).

This question is not new. It was raised explicitly by Gluesing-Luerssen and Pllaha \cite[Question~7.4]{GLP19}, who exhibited a symplectic isometry of a four-qubit stabilizer code that is not a monomial map \cite[Example~7.3]{GLP19}, and it was taken up systematically by Pllaha \cite{Pllaha20,PllahaThesis}, who proved that monomial maps of stabilizer codes correspond exactly to local Clifford operations combined with qudit permutations, constructed further non-monomial symplectic isometries of qubit stabilizer codes, and connected the phenomenon to the LU--LC problem. (On a self-orthogonal code the symplectic form restricts to zero, so the ``symplectic isometries'' of \cite{Pllaha20} between such codes are precisely the weight-preserving isomorphisms considered here; the settings coincide.) Section~\ref{sec:related} discusses this prior work in detail, together with the classical counterexamples of Dyshko, which---as we observe below---were already self-orthogonal, though this was not remarked at the time.

What this note contributes is the fine structure of the failure, at the smallest scales at which it can occur, and at the level of the physical codespaces:
the extension property for stabilizer codes fails at the threshold length $q+1$ for \emph{every} qudit dimension $q$, with self-orthogonality automatic and with an obstruction that survives arbitrary local unitaries, not merely local Cliffords; for qubits the minimal counterexample occurs at length $3$, is unique up to the obvious symmetries, and admits a complete census; and the full-support version of the question (no idle qudits) has minimal lengths $4$ (for isometries) and $5$ (for pairs of codes).

On the quantum side this places the extension problem in contrast with the MacWilliams identities, which transfer to quantum codes essentially unchanged: the Shor--Laflamme weight enumerators of a quantum code satisfy a MacWilliams transform \cite{ShorLaflamme97}, with many refinements. The identities survive quantization; the extension theorem does not. It also places the problem alongside, but distinct from, the LU--LC problem---whether local-unitary equivalent stabilizer states must be local-Clifford equivalent, answered negatively in \cite{Ji10}---a comparison we take up in Sections~\ref{sec:fullsupport} and~\ref{sec:discussion}.

\subsection{Results}

Write $\A_q=\F_q^2$ for the label alphabet of a qudit of dimension $q$, with the symplectic form $\langle (a,b),(a',b')\rangle = ab'-a'b$; for $q=2$ we identify $\A_2\cong\F_4$ with the trace-Hermitian form, and Pauli labels $1\leftrightarrow X$, $\omega\leftrightarrow Z$, $\omega^2\leftrightarrow Y$. Precise definitions, including the notion of monomial transformation appropriate to the quantum setting and its equivalence with local Clifford $\rtimes$ permutation action, are in Section~\ref{sec:prelim}.

\begin{theorem}[Failure of the quantum extension property; Section~\ref{sec:family}]\label{thm:main}
For every prime power $q$ there exist trace-symplectically self-orthogonal additive codes $C_+,C_-\le\A_q^{\,q+1}$ of size $q^2$---stabilizer codes with parameters $[[q+1,q-1]]_q$---and an additive, Hamming-weight-preserving bijection $f\colon C_+\to C_-$ such that:
\begin{enumerate}
\item[(i)] $f$ extends to no monomial transformation of $\A_q^{\,q+1}$; in fact $C_+$ and $C_-$ are not monomially equivalent by any map;
\item[(ii)] the corresponding codespaces are not mapped to one another by any operator of the form $(U_1\otimes\cdots\otimes U_{q+1})\,P_\pi$ with arbitrary unitaries $U_i$ and a qudit permutation $\pi$---local Clifford plays no role in the obstruction;
\item[(iii)] nevertheless $C_+$ and $C_-$ have identical Shor--Laflamme enumerators; for $q=2$ both codes are $[[3,1]]$ with $A(z)=1+3z^2$ and $B(z)=1+3z+3z^2+9z^3$.
\end{enumerate}
For $q=2$ the two codes are $S_+=\langle XXI,\,ZZI\rangle$ and $S_-=\langle IXX,\,XIX\rangle$, with codespaces $\ket{\Phi^+}\otimes\mathbb{C}^2$ and $\mathrm{span}\{\ket{{+}{+}{+}},\ket{{-}{-}{-}}\}$, and $f\colon XXI\mapsto IXX$, $ZZI\mapsto XIX$, $YYI\mapsto XXI$.
\end{theorem}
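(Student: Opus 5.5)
The plan is to build both codes as images of the label space $V=\F_q^2$ itself, under two different $\F_q$-linear embeddings $V\hookrightarrow\A_q^{\,q+1}$. Both embeddings will give every nonzero codeword Hamming weight $q$. The isometry is then $f=\iota_-\circ\iota_+^{-1}$, which preserves weight automatically. Part~(i) will be detected by an idle coordinate, which is a support invariant. Part~(ii) lifts this invariant to the codespace projectors, and part~(iii) follows from the weight distribution alone.

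\emph{Construction.} Define
\[
\iota_+(v)=(v,\dots,v,0)
\]
($q$ copies of $v$, last coordinate $0$). Fix a nonzero $e\in\A_q$, let $\lambda_1,\dots,\lambda_{q+1}$ be representatives of the $q+1$ projective points of the dual space $V^*$, and define
\[
\iota_-(v)=(\lambda_1(v)e,\dots,\lambda_{q+1}(v)e).
\]
Set $C_\pm=\iota_\pm(V)$; both have size $q^2$.

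\emph{Properties of the codes.}
\begin{itemize}
\item Self-orthogonality of $C_+$: the trace-symplectic form gives $\sum_{i\le q}\langle v,w\rangle=q\langle v,w\rangle=0$, since $\operatorname{char}\F_q$ divides $q$.
\item Self-orthogonality of $C_-$: every coordinate of $C_-$ lies on the single line $\F_q e$, and $\langle e,e\rangle=0$.
\item Weights in $C_+$: every nonzero $\iota_+(v)$ has weight exactly $q$.
\item Weights in $C_-$: a nonzero $v$ lies in the kernel of exactly one $\lambda_i$, so $\iota_-(v)$ also has weight $q$.
\item Check against the stated $q=2$ case: we get $S_+=\langle XXI,ZZI\rangle$, and with $e=X$ we get $S_-=\{IXX,XIX,XXI\}$. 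Choosing the basis of $V$ so that $f$ matches the displayed map is a relabeling.
\item Parameters: the stabilizer has $q^2=q^{2}$ elements on $n=q+1$ qudits, so the code is $[[q+1,q-1]]_q$.
\end{itemize}

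\emph{Part (i).} A monomial transformation permutes coordinates and applies an automorphism in each. It therefore maps the support union $\bigcup_{c}\operatorname{supp}(c)$ bijectively. For $C_+$ this union has size $q$, while for $C_-$ it is all $q+1$ coordinates. Hence no monomial map sends $C_+$ onto $C_-$, and in particular $f$ does not extend.

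\emph{Part (ii).} Write the codespace projectors as $P_\pm=q^{-2}\sum_{g\in S_\pm}g$. Then $P_+=P'\otimes I$ on the last qudit. The property ``some tensor factor of the projector is the identity'' is preserved by $(U_1\otimes\cdots\otimes U_n)P_\pi$: conjugation by $U_j$ fixes $I$, and $\pi$ only moves which factor is idle. It remains to check that $P_-$ has no such factor. Expand each $P_\pm$ in the Pauli basis on qudit $j$; the expansion is unique. For every $j$, some element of $S_-$ acts nontrivially on qudit $j$, so the partial trace $\tr_j P_-\otimes I/q$ differs from $P_-$. The only real care needed is uniqueness of the Pauli expansion, which holds because the Weyl operators form an orthogonal basis of operators on each qudit.

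\emph{Part (iii).} Both codes have weight enumerator $A(z)=1+(q^2-1)z^q$. The $B$ enumerator is the enumerator of the symplectic dual, which is obtained from $A$ by the quantum MacWilliams transform, so the two $B$ enumerators agree as well. For $q=2$ this gives $A=1+3z^2$. Computing $B$ through the transform yields $1+3z+3z^2+9z^3$, and its coefficients sum to $16=2^{6}/4$ as a check.

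I expect the main obstacle to be the LU-invariance argument in (ii); parts (i) and (iii) are bookkeeping once the construction is in place.
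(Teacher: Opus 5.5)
Your proposal is correct and follows the paper's proof essentially step for step. It uses the same pair $C_+=\{(x,\dots,x,0)\}$ and $C_-=\{(\lambda_j(x)e)_j\}$ with kernels running over the $q+1$ lines, and gets self-orthogonality from the multiplicity-$q$ and rank-one observations (Lemma~\ref{lem:selforth}). Part (i) is the idle-coordinate count, (ii) lifts it to projectors (Lemma~\ref{lem:idle}), and (iii) comes from the common weight distribution $1+(q^2-1)z^q$.

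The only difference is in (ii). The paper excludes $P_-=Q\otimes I_{(j)}$ using $gP_-=P_-$ and a partial trace, rather than a Pauli-expansion argument. Your version additionally needs that distinct elements of $S_-$ are non-proportional, so their terms cannot cancel; this holds, but you should state it. You also leave the explicit $q=2$ codespaces $\ket{\Phi^+}\otimes\mathbb{C}^2$ and $\mathrm{span}\{\ket{{+}{+}{+}},\ket{{-}{-}{-}}\}$ as a routine check you do not write out.
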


Self-orthogonality in the construction costs nothing: it follows from two one-line lemmas (Lemma~\ref{lem:selforth}) stating that repetition-type codes with $q$ repeated blocks, and codes all of whose coordinate maps have rank at most one, are automatically self-orthogonal. Part~(ii) rests on an equally simple but useful invariant: the number $t$ of qudits on which a stabilizer code acts trivially is an invariant of the codespace under arbitrary product unitaries and permutations (Lemma~\ref{lem:idle}).

\begin{theorem}[Minimality and essential uniqueness for qubits; Section~\ref{sec:census}]\label{thm:census-intro}
For $q=2$: among all $35=4+31$ trace-Hermitian self-orthogonal additive codes in $\F_4^n$ with $n\le 2$ there is no counterexample---every weight-preserving isomorphism extends. At $n=3$, among the $514$ self-orthogonal codes there are exactly $5832$ non-extendable weight-preserving isometries, occurring on exactly $486=18\times 27$ unordered pairs of codes, and every such pair is a monomial image of the canonical pair $(C_+,C_-)$ of Theorem~\ref{thm:main}. Moreover $72$ of these isometries preserve the full Pauli-type profile of every element (the unordered multiset of $X$s, $Y$s, $Z$s), for instance
\[
C=\langle XZI,\,YXI\rangle \longrightarrow D=\langle XZI,\,IZY\rangle,\qquad
XZI\mapsto XZI,\; ZYI\mapsto IZY,\; YXI\mapsto XIY.
\]
\end{theorem}

All counterexamples at the minimal length are detected by the idle-qudit invariant $t$. This raises the question whether the failure of the extension property is only the trivial phenomenon of an idle qudit. It is not:

\begin{theorem}[Full-support counterexamples; Section~\ref{sec:fullsupport}]\label{thm:fullsupport-intro}
For $q=2$ there exist full-support (no idle qudit, $t=0$ on both sides) self-orthogonal counterexamples, and the minimal lengths are as follows.
\begin{enumerate}
\item[(i)] The minimal length of a full-support counterexample \emph{isometry} is $4$, attained by the pair of $[[4,1]]$ codes $C=\langle IIXX,\,XXIX,\,ZZXI\rangle$, $D=\langle XXII,\,YYIX,\,ZZXI\rangle$ with $f\colon IIXX\mapsto XXII$, $XXIX\mapsto YYIX$, $ZZXI\mapsto ZZXI$ (here $C$ and $D$ are monomially equivalent to each other, but $f$ itself extends to no monomial map).
\item[(ii)] The minimal length of a full-support weight-isometric pair of codes that are \emph{not even monomially equivalent} is $5$ among counterexamples of stabilizer $\F_2$-dimension $k\le 3$, attained by the pair of $[[5,2]]$ codes
\[
S_A=\langle ZZZZI,\,XXIIX,\,IIXXX\rangle,\qquad
S_B=\langle ZZIZX,\,XIZXI,\,IXXYI\rangle,
\]
weight-isometric via $f$ matching the listed generators in order, with common enumerator $A(z)=1+2z^3+3z^4+2z^5$, yet with no monomial transformation mapping $S_A$ onto $S_B$. Stabilizer dimension $k\le 2$ admits no full-support counterexamples at any length (Proposition~\ref{prop:k2}).
\item[(iii)] At length $6$ there is such a pair all of whose nontrivial stabilizer elements have weight at least $4$: the $[[6,3]]$ codes 
$$S_+=\langle XXIIXX,\,IIXXXX,\,ZZZZIX\rangle,S_-=\langle XXZXII,\,ZZXYII,\,IZZZXX\rangle$$ with common enumerator $A(z)=1+3z^4+4z^5$.
\end{enumerate}
\end{theorem}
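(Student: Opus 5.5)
The theorem has three parts of different character. The explicit constructions need direct verification. The minimality claims need an exhaustive search, organized by invariants so that it stays small.

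For the explicit pairs in (i)–(iii), I would work in the symplectic picture $\A_2=\F_2^2$. For each code I would first check commutativity of the listed generators. This is pairwise: an even number of positions with distinct non-identity Paulis. I would also check full support, meaning every coordinate is hit by some generator. Next I would list all $2^k$ elements of each code and verify that $f$, extended additively from the generators, preserves weight element by element. This recovers the stated enumerators $1+2z^3+3z^4+2z^5$ and $1+3z^4+4z^5$.

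For the isometry in (i), I would show that $f$ extends to no monomial map. A monomial map is a permutation $\pi$ together with coordinatewise $\GL_2(\F_2)$ maps. The monomial map induced by $f$ must send each coordinate projection $c\mapsto c_i$ of $C$ to a coordinate projection of $D$ composed with an automorphism. In particular it preserves each kernel $\ker(c\mapsto c_i)\le C$. So I would compute the seven-or-fewer coordinate kernels of $C$, push them through $f$, and compare the resulting multiset with the coordinate kernels of $D$, counted with the rank of each coordinate map. A mismatch proves non-extendability. For (ii) and (iii), non-equivalence of the codes needs an invariant of the unlabeled codes that is blind to $f$. The natural one is again the multiset of (kernel, rank) pairs of coordinate functionals, taken up to $\Aut(C)\cong\GL_k(\F_2)$. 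Concretely, I would compare the orbit structure: for instance, the configuration of the $n$ subspaces, in particular which pairs of coordinates share a kernel or have intersecting kernels of given dimension. If this fails to separate the codes, I would fall back on a direct enumeration over the $5!\cdot 6^5$ monomial maps.

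Minimality is the main obstacle. Using Theorem~\ref{thm:census-intro}, every counterexample at $n\le 3$ has an idle qudit, so full-support counterexamples need $n\ge 4$. Together with the example above, this gives (i). For (ii) I would invoke Proposition~\ref{prop:k2} to dispose of $k\le 2$. For $k=3$ and $n=4$, the reduction I would use is the one behind the kernel argument: an additive code of dimension $k$ and length $n$ up to monomial equivalence is a multiset of $n$ nonzero maps $\F_2^k\to\F_2^2$, each taken up to left $\GL_2$. Such a map is a line or a plane in the dual space, and the enumeration is taken up to the right $\GL_3$ action. Weight-isometric pairs correspond to multisets with equal ``projective weight functions'' on $\F_2^3\setminus 0$. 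I would enumerate the full-support self-orthogonal configurations for $k=3$, $n=4$ and group them by weight function. The claim to check is that equal weight functions force $\GL_3$-equivalent configurations. This is a finite computation, which I would do by computer, as in the census. The step I am least sure is by hand is this $n=4$, $k=3$ uniqueness, since the line/plane mixtures are exactly where Wood-type failures arise.
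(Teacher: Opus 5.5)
Your proposal is correct and follows essentially the paper's own route. It verifies the displayed generators directly, and it shows non-extendability of $f$ in (i) via differing multisets of coordinate kernels (Lemmas~\ref{lem:orbit}--\ref{lem:kernels}). For inequivalence in (ii)--(iii) it uses kernel configurations up to $\GL(3,2)$ or brute force; your count of coordinate pairs sharing a kernel is exactly the paper's vanishing-subcode invariant, giving $2$ versus $0$ for $S_A,S_B$ and $2$ versus $1$ for $S_\pm$. Minimality comes from the census at $n\le 3$, Proposition~\ref{prop:k2}, and an exhaustive search over self-orthogonal, jointly injective kernel multisets for $k=3$, $n=4$ up to $\GL(3,2)$. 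The only slip is cosmetic: at length $4$ there are exactly four coordinate kernels, not ``seven-or-fewer''.
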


For the pairs in (ii) and (iii) the invariant $t$ vanishes on both sides and we do not know whether the codespaces are equivalent under general local unitaries and permutations; we pose this as Question~\ref{q:lulc}, the extension analogue of the LU--LC problem.

\subsection{Related work and attribution}\label{sec:related}

The mechanism behind all counterexamples below---linear dependence among orbit-indicator functions over an alphabet with non-cyclic socle---is due to Wood \cite{Wood08,Wood09}, and the study of how far isometries of a code can be from monomial maps was initiated, for classical additive codes, by Wood \cite{Wood18}.

\emph{Dyshko's counterexamples are self-orthogonal.} Unextendable Hamming isometries of additive codes over a field extension $K\subset L$, including a family at the threshold length $|K|+1$, were constructed by Dyshko \cite{Dyshko16}, with the minimal length determined in \cite{DyshkoGeom}. In fact, for $K=\F_q\subset L=\F_{q^2}$, the threshold-length pair of \cite[Example~3]{Dyshko16} coincides, up to monomial equivalence, with the pair $(C_-,C_+)$ of Theorem~\ref{thm:main}: one code is a $q$-fold repetition code with an added zero coordinate, the other has coordinate functionals whose kernels run over the $q+1$ lines of $K^2$. By Lemma~\ref{lem:selforth} below, \emph{both} codes are automatically trace-symplectically self-orthogonal---the first because its blocks are repeated with multiplicity divisible by $\mathrm{char}\,K$, the second because all its coordinate maps have rank one. This appears not to have been remarked before: the classical counterexamples to the additive extension theorem were, unknowingly, already counterexamples living on the self-orthogonal quadric. (By contrast, the exotic-automorphism example of Wood reproduced in \cite[Example~5]{Dyshko16} is not self-orthogonal.)

\emph{The stabilizer question in prior work.} Gluesing-Luerssen and Pllaha \cite[Section~7]{GLP19} posed the question of Section~\ref{sec:quantumquestion} and gave a four-qubit stabilizer code admitting a symplectic isometry that is not a monomial map. Pllaha \cite{Pllaha20} (see also the thesis \cite{PllahaThesis}) then studied symplectic isometries of stabilizer codes in depth: he proved the correspondence between monomial maps and local Clifford operations combined with permutations (\cite[Theorem~5.8]{Pllaha20}; our Section~\ref{sec:monomial} states the same correspondence), transferred Wood's realization theorem for isometry groups \cite{Wood18} to stabilizer codes by using $p$-fold concatenation to obtain self-orthogonality for free (\cite[Corollary~4.15]{Pllaha20}; the same mechanism as our Lemma~\ref{lem:selforth}(a)), and connected non-monomial isometries to the LU--LC problem (\cite[Section~6]{Pllaha20}), anticipating the framing of our Question~\ref{q:lulc}.

Relative to this literature, the contributions of the present note are: the threshold-length family of Theorem~\ref{thm:main} for \emph{every} prime power $q$, with the codespace-level obstruction of part~(ii) via the elementary invariant $t$; the exact minimal length ($3$, for qubits) together with the uniqueness and census results of Theorem~\ref{thm:census-intro}, including the type-profile-preserving refinement; the impossibility result for stabilizer dimension $2$ (Proposition~\ref{prop:k2}); and the determination of the minimal full-support lengths in Theorem~\ref{thm:fullsupport-intro}.

One further remark on the codespace level is in order. Example~5.10 of the arXiv version of \cite{Pllaha20} exhibits two self-dual four-qubit stabilizer codes together with a symplectic isometry between them, and asserts that the codes are neither monomially equivalent nor locally unitarily equivalent. The displayed isometry indeed extends to no monomial map (we verified this by brute force); however, on the generators as printed there, an exhaustive search over all $6^4\cdot 4!$ monomial maps finds $32$ monomial equivalences between the two \emph{codes}, each involving a nontrivial qubit transposition, while the local-unitary argument given there considers product unitaries without a permutation. If this reading is correct, then the pairs of codes in Theorems~\ref{thm:main} and~\ref{thm:fullsupport-intro}(ii),(iii) are the first stabilizer pairs verified to be weight-isometric yet monomially inequivalent, and---for Theorem~\ref{thm:main}---inequivalent under arbitrary local unitaries combined with permutations. We would welcome correction on this point.

We have not been able to conduct an unlimited literature search, and we would be grateful for pointers to any prior appearance of the statements proved here.

\section{Preliminaries}\label{sec:prelim}

\subsection{Stabilizer codes as self-orthogonal additive codes}

Fix a prime power $q$, the qudit dimension. The generalized Pauli (Heisenberg--Weyl) group on one qudit is generated by shift and clock operators $X,Z$ with $ZX=\omega XZ$ ($\omega$ a primitive $p$-th root of unity, $p=\mathrm{char}\,\F_q$; for prime-power $q$ one uses the standard trace construction \cite{Ketkar06}). Modulo phases, $n$-qudit Pauli operators are labelled by vectors in $\A_q^n$, $\A_q=\F_q^2$, a Pauli with label $((a_1,b_1),\dots,(a_n,b_n))$ acting as $\bigotimes_i X^{a_i}Z^{b_i}$ up to phase. Two Pauli operators commute iff their labels are orthogonal for the trace-symplectic form
\[
\langle u,v\rangle \;=\; \tr_{\F_q/\F_p}\Bigl(\textstyle\sum_i (a_ib_i'-a_i'b_i)\Bigr).
\]
For $q=2$ one identifies $\A_2\cong\F_4=\{0,1,\omega,\omega^2\}$ ($1\mapsto X$, $\omega\mapsto Z$, $\omega^2\mapsto Y$), and the form becomes the trace-Hermitian form $\langle u,v\rangle=\sum_i \tr(u_i\bar v_i)$, $\bar x=x^2$, $\tr(x)=x+x^2$; note that $\tr(x\bar y)$ is the symplectic form on $\F_4\cong\F_2^2$.

A stabilizer group $S$ (an abelian subgroup of the Pauli group not containing a nontrivial phase times the identity) projects, modulo phases, to an additive (i.e., $\F_p$-linear) code $C\le\A_q^n$ that is self-orthogonal for the trace-symplectic form; conversely every self-orthogonal additive code lifts to a stabilizer group, uniquely up to signs \cite{CRSS98,Ketkar06}. If $|C|=q^{\,n-k}$ the codespace has dimension $q^k$; we then speak of an $[[n,k]]_q$ code. The \emph{weight} of a Pauli operator or of its label is the number of qudits on which it acts nontrivially---the Hamming weight $\wt(u)=\#\{i: u_i\ne 0\}$ of the label. The Shor--Laflamme enumerator of the code is $A(z)=\sum_{u\in C} z^{\wt(u)}$ (suitably normalized), and $B(z)$ is the enumerator of the trace-symplectic dual $C^\perp$ (the labels of the normalizer); $A$ determines $B$ by the quantum MacWilliams identity \cite{ShorLaflamme97}.

\subsection{Monomial maps and local Cliffords}\label{sec:monomial}

\begin{definition}\label{def:monomial}
A \emph{monomial transformation} of $\A_q^n$ is a map $$(u_1,\dots,u_n)\mapsto(\tau_1(u_{\pi^{-1}(1)}),\dots,\tau_n(u_{\pi^{-1}(n)}))$$ for a permutation $\pi\in S_n$ and $\tau_i\in\Sp(2,q)$ acting on each label. (For $q=2$, $\Sp(2,2)=\SL(2,2)=\GL(2,2)\cong S_3$ is the full group of additive automorphisms of $\F_4$, so nothing is lost by the symplectic restriction; for general $q$ our non-extension results hold verbatim for the larger group of all $\F_p$-linear automorphisms in each coordinate, since they are proved via invariants of that larger group.)
\end{definition}

The quantum meaning of monomial maps is standard: conjugation by a single-qudit Clifford unitary acts on Pauli labels as an element of $\Sp(2,q)$ (translations act trivially on labels), and every element of $\Sp(2,q)$ arises this way; conjugation by a permutation of qudits permutes coordinates. Hence a map of stabilizer groups is induced by local Cliffords combined with a qudit permutation if and only if the induced map of additive codes extends to a monomial transformation. For stabilizer codes this correspondence is proved carefully in \cite[Theorem~5.8]{Pllaha20}.

\begin{lemma}\label{lem:transitive}
$\Sp(2,q)$ acts transitively on $\A_q\setminus\{0\}$. Consequently the symmetrized weight composition of additive codes over $\A_q$ with respect to the Clifford-induced symmetry group coincides with the Hamming weight: the finest coordinate-wise weight invariant under local Cliffords is the weight.
\end{lemma}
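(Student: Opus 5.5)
The plan is to prove transitivity directly from the fact that $\Sp(2,q)=\SL(2,q)$ acting on $\F_q^2$, and then read off the consequence about symmetrized weight compositions from the orbit structure.

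\emph{Transitivity.} Identify $\A_q=\F_q^2$ with the symplectic form $\langle(a,b),(a',b')\rangle=ab'-a'b$. This is the determinant of the $2\times2$ matrix with columns $(a,b)^T$ and $(a',b')^T$, so a matrix $g\in\GL(2,q)$ preserves it if and only if $\det g=1$; hence $\Sp(2,q)=\SL(2,q)$. Given a nonzero $u\in\F_q^2$, I choose $v$ with $\langle u,v\rangle=1$. Such a $v$ exists because $u\neq0$ and the form is nondegenerate, so $\langle u,\cdot\rangle$ is a nonzero functional and can be rescaled. The matrix $g$ with columns $u,v$ then has determinant $1$ and sends $e_1=(1,0)$ to $u$. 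Every nonzero vector is therefore in the orbit of $e_1$, which gives transitivity.

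\emph{Weight invariant.} The symmetrized weight composition of $x\in\A_q^n$ with respect to a group $G\le\Aut(\A_q)$ records, for each $G$-orbit $O$ of $\A_q$, the number of coordinates $i$ with $x_i\in O$. The group relevant here is the Clifford-induced one, namely $\Sp(2,q)$ acting on labels (translations act trivially). Since $0$ is fixed by every linear map and, by the first step, $\A_q\setminus\{0\}$ is a single orbit, the orbits are exactly $\{0\}$ and $\A_q\setminus\{0\}$. The composition therefore reduces to the pair $(n-\wt(x),\wt(x))$, which carries the same information as the Hamming weight.

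\emph{Finest invariant.} The phrase ``finest coordinate-wise weight invariant under local Cliffords'' is just a restatement of this. A coordinate-wise function invariant under each $\tau_i\in\Sp(2,q)$ must be constant on orbits, so it can only distinguish zero from nonzero. I would also note, for the parenthetical in Definition~\ref{def:monomial}, that the larger group of $\F_p$-linear automorphisms has the same two orbits. This follows because it contains $\Sp(2,q)$ and fixes $0$.

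There is no real obstacle here. The only point needing care is that for prime-power $q$ the label-level action of single-qudit Cliffords may be taken to be $\Sp(2,q)$ with respect to the $\F_q$-bilinear form above, as set out in Section~\ref{sec:monomial}. Under the trace form alone, the automorphism group could be larger, but it contains $\Sp(2,q)$, so transitivity is inherited.
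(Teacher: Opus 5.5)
Your proof is correct and follows the paper's argument. Like the paper, you identify $\Sp(2,q)=\SL(2,q)$, complete a nonzero $u$ to a determinant-one matrix $(u\ v)$ sending $(1,0)^T$ to $u$, and read off that the orbit partition $\{0\}\sqcup(\A_q\setminus\{0\})$ makes the symmetrized weight composition equivalent to the Hamming weight; your added details (the determinant identity showing $\Sp(2,q)=\SL(2,q)$, existence of $v$ by nondegeneracy, and transitivity passing to larger groups) are accurate points the paper leaves implicit.
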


\begin{proof}
Given $0\ne v\in\F_q^2$, complete $v$ to a basis $(v,w)$ with $\det(v\ w)=1$; the matrix $(v\ w)\in\SL(2,q)=\Sp(2,q)$ maps $(1,0)^T$ to $v$. The second statement is immediate from the definition of symmetrized weight composition (the orbit-partition of $\A_q$ is $\{0\}\sqcup(\A_q\setminus\{0\})$).
\end{proof}

Lemma~\ref{lem:transitive} fixes the correct quantum formulation of the extension problem: Hamming weight is not merely one choice among many, it is the finest choice compatible with the local symmetry of quantum mechanics. This is also why the annihilator-weight techniques that power classical extension theorems over rings \cite{AssemJAA,Assem15} give no traction here: over a field base ring every nonzero alphabet element has zero annihilator, and the annihilator weight degenerates to the Hamming weight.

\begin{definition}[Quantum extension property]
Fix $q$ and $n$. We say the quantum extension property $\mathrm{QEP}(q,n)$ holds if for every pair $C,D\le\A_q^n$ of trace-symplectically self-orthogonal additive codes and every additive bijection $f\colon C\to D$ with $\wt(f(u))=\wt(u)$ for all $u\in C$, the map $f$ extends to a monomial transformation of $\A_q^n$.
\end{definition}

By the discussion above, $\mathrm{QEP}(q,n)$ is exactly the statement ``every weight-preserving isomorphism of $n$-qudit stabilizer groups is implemented by local Cliffords and a permutation.''

\subsection{Parameterized codes and the orbit criterion}

It is convenient (following \cite{Wood09,AssemJAA}) to encode a code together with an isometry candidate as a \emph{parameterized code}: an $\F_p$-module $M$ and an $n$-tuple $\Lambda=(\lambda_1,\dots,\lambda_n)$, $\lambda_i\in\Hom(M,\A_q)$, jointly injective ($\bigcap_i\ker\lambda_i=0$), with code $C_\Lambda=\{(\lambda_1(x),\dots,\lambda_n(x)): x\in M\}$. A pair $(M,\Lambda)$, $(M,\Lambda')$ with the same $M$ encodes the isometry candidate $f\colon C_\Lambda\to C_{\Lambda'}$, $\Lambda(x)\mapsto\Lambda'(x)$, and $f$ preserves weight iff for every $x\in M$ the number of $i$ with $\lambda_i(x)\ne 0$ equals the number with $\lambda_i'(x)\ne 0$.

\begin{lemma}[Orbit criterion]\label{lem:orbit}
Let $G\le\Aut_{\F_p}(\A_q)$ be the coordinate group of the monomial maps under consideration. The isometry $f$ encoded by $(\Lambda,\Lambda')$ extends to a $G$-monomial transformation if and only if the multisets of $G$-orbits $\{G\circ\lambda_i\}_i$ and $\{G\circ\lambda_i'\}_i$ (post-composition orbits in $\Hom(M,\A_q)$) coincide.
\end{lemma}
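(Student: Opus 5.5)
We prove both directions by comparing coordinate functionals directly. A $G$-monomial transformation is given by a permutation $\pi\in S_n$ and elements $\tau_1,\dots,\tau_n\in G$. It extends $f$ exactly when it carries $\Lambda(x)$ to $\Lambda'(x)$ for every $x\in M$. Written coordinatewise, this says $\tau_i(\lambda_{\pi^{-1}(i)}(x))=\lambda'_i(x)$ for all $x$ and $i$, that is,
\[
\lambda'_i=\tau_i\circ\lambda_{\pi^{-1}(i)}\qquad\text{in }\Hom(M,\A_q),\ i=1,\dots,n.
\]
So the whole proof consists of translating this identity of maps into a statement about orbits.

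For the forward direction, suppose such $\pi$ and $\tau_i$ exist. The displayed identity gives $G\circ\lambda'_i=G\circ\lambda_{\pi^{-1}(i)}$ for each $i$. Hence $\pi$ is a bijection of index sets matching the list of orbits $(G\circ\lambda_j)_j$ with the list $(G\circ\lambda'_i)_i$. In particular the two multisets of orbits coincide.

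For the converse, suppose the two multisets of orbits coincide. Then there is a bijection $\sigma$ of $\{1,\dots,n\}$ with $G\circ\lambda'_i=G\circ\lambda_{\sigma(i)}$ for each $i$. To build it, pair off equal orbits one at a time; equal multiplicities guarantee this succeeds. Set $\pi=\sigma^{-1}$. For each $i$, the functional $\lambda'_i$ lies in the orbit $G\circ\lambda_{\sigma(i)}$, so there is some $\tau_i\in G$ with $\lambda'_i=\tau_i\circ\lambda_{\sigma(i)}$. These $\pi$ and $\tau_i$ define a $G$-monomial map satisfying the displayed identity, so it sends $\Lambda(x)$ to $\Lambda'(x)$ for all $x$ and therefore extends $f$.

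The only point needing care is the bookkeeping of multisets. Several coordinates may share an orbit, and zero functionals have the one-element orbit $\{0\}$. The multiset formulation handles both automatically, because the matching $\sigma$ is chosen orbit by orbit with multiplicity. Joint injectivity of $\Lambda$ is not actually needed for the equivalence; it only ensures that $f$ is well defined as a map on $C_\Lambda$. I do not expect a genuine obstacle here: the lemma is essentially a reformulation of the definition of a $G$-monomial transformation.
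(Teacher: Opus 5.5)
Your proposal is correct and follows the paper's own argument: unwind the definition to the coordinatewise identity $\lambda'_i=\tau_i\circ\lambda_{\pi^{-1}(i)}$ in $\Hom(M,\A_q)$, then observe that a permutation with such $\tau_i\in G$ exists exactly when the orbit multisets match. Your remark that joint injectivity only serves to make $f$ well defined is also accurate.
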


\begin{proof}
A monomial map $(\pi,(\tau_i))$ extends $f$ iff $\lambda'_{\pi(i)}=\tau_{\pi(i)}\circ\lambda_i$ for all $i$, i.e., iff some bijection of coordinates matches each $\lambda_i$ with a $\lambda_j'$ in the same post-composition orbit.
\end{proof}

For $q=2$ and $G=\GL(2,2)$ the orbits are particularly simple:

\begin{lemma}\label{lem:kernels}
For $\A=\F_4$ over $\F_2$ and any finite $\F_2$-space $M$, two maps $\lambda,\mu\in\Hom(M,\F_4)$ lie in the same $\GL(2,2)$-post-composition orbit if and only if $\ker\lambda=\ker\mu$.
\end{lemma}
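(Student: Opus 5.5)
Both directions reduce to elementary facts about $2\times 2$ matrices over $\F_2$ and about linear maps with prescribed kernel. Throughout, identify $\F_4$ with $\F_2^2$ as an $\F_2$-space, so that $\GL(2,2)$ is exactly the group of $\F_2$-linear automorphisms of $\F_4$.

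\emph{Necessity.} Suppose $\mu=\tau\circ\lambda$ for some $\tau\in\GL(2,2)$. Since $\tau$ is injective, $\mu(x)=0$ if and only if $\lambda(x)=0$. Hence $\ker\mu=\ker\lambda$.

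\emph{Sufficiency.} Suppose $\ker\lambda=\ker\mu=K$. Both maps factor through the quotient $\pi\colon M\to M/K$ as $\lambda=\bar\lambda\circ\pi$ and $\mu=\bar\mu\circ\pi$. The induced maps $\bar\lambda,\bar\mu\colon M/K\to\F_4$ are injective, and they have the same image dimension $d=\dim M/K\in\{0,1,2\}$. Let $I=\bar\lambda(M/K)$ and $J=\bar\mu(M/K)$. The map $\sigma=\bar\mu\circ\bar\lambda^{-1}\colon I\to J$ is an $\F_2$-linear isomorphism between subspaces of $\F_4$ of equal dimension. Extend a basis of $I$ to a basis of $\F_4$, and a basis of $J$ to a basis of $\F_4$. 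Then $\sigma$ extends to some $\tau\in\GL(2,2)$ by sending the complementary basis vectors of $I$ to those of $J$. With this choice $\tau\circ\lambda=\tau\circ\bar\lambda\circ\pi=\bar\mu\circ\pi=\mu$.

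The only step needing any care is this extension of a partial isomorphism $I\to J$ to a full automorphism. It is the standard fact that $\GL$ acts transitively on injective linear maps from a fixed space, and it is needed here because $\Hom(M,\F_4)$ contains maps of rank $0$, $1$ and $2$. No symplectic structure enters: the statement uses the full $\GL(2,2)$, which for $q=2$ coincides with $\Sp(2,2)$ by the remark after Definition~\ref{def:monomial}. The rank-$0$ case ($K=M$, $\lambda=\mu=0$) is trivial, and the argument above covers it uniformly with $\tau$ arbitrary.
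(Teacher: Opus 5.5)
Your proof is correct and follows essentially the same route as the paper. Kernels are preserved under post-composition by automorphisms. For a common kernel $K$, both maps factor as embeddings of $M/K$ into $\F_4$, and your basis-extension of $\sigma=\bar\mu\circ\bar\lambda^{-1}$ is a spelled-out version of the paper's appeal to transitivity of $\GL(2,2)$ on ordered linearly independent tuples.
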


\begin{proof}
Post-composition by automorphisms preserves kernels. Conversely, maps with a common kernel $K$ factor as embeddings of $M/K$ (of $\F_2$-dimension $\le 2$) into $\F_2^2$, and $\GL(2,2)$ acts transitively on ordered linearly independent tuples of each size, hence transitively on such embeddings.
\end{proof}

So for qubits, extension is decided by the multiset of coordinate kernels---a fact we use both in proofs and as an independent check on the brute-force computations.

\section{A counterexample family for every qudit dimension}\label{sec:family}

Throughout this section $M=\A_q=\F_q^2$, and we write $\langle\cdot,\cdot\rangle$ for the symplectic form on $\A_q$ (composing with $\tr_{\F_q/\F_p}$ where needed does not affect any argument). The projective line over $\F_q$ has $q+1$ points: the $\F_q$-lines $L_0,L_1,\dots,L_q\le\F_q^2$. For each line $L_j$ fix a rank-one $\F_q$-linear map $\lambda_j\colon M\to\A_q$ with $\ker\lambda_j=L_j$ (for instance $\lambda_j(x)=\langle v_j,x\rangle\,e$, where $L_j=\F_q v_j$ and $e\ne 0$ is any fixed vector). Define length-$(q+1)$ parameterized codes
\[
C_+ = \bigl\{(\underbrace{x,x,\dots,x}_{q},0): x\in M\bigr\},\qquad
C_- = \bigl\{(\lambda_0(x),\lambda_1(x),\dots,\lambda_q(x)): x\in M\bigr\},
\]
and let $f\colon C_+\to C_-$ send the codeword of $x$ to the codeword of $x$. Both parameterizations are jointly injective (distinct lines meet in $0$), so $|C_\pm|=q^2$ and $f$ is a well-defined additive bijection; both codes are even $\F_q$-linear.

\begin{lemma}[Automatic self-orthogonality]\label{lem:selforth}
\emph{(a)} Any additive code of the form $\{(\mu_1(x),\dots,\mu_n(x)): x\in M\}$ in which the multiset $(\mu_i)$ consists of maps repeated with multiplicities divisible by $p=\mathrm{char}\,\F_q$, together with arbitrary zero coordinates, is trace-symplectically self-orthogonal. \emph{(b)} Any additive code all of whose coordinate maps have rank $\le 1$ (over $\F_q$) is trace-symplectically self-orthogonal.
\end{lemma}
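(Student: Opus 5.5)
The plan is to reduce self-orthogonality to a statement about a bilinear form on the parameter space $M$. For a parameterized code $\{(\mu_1(x),\dots,\mu_n(x)):x\in M\}$, the trace-symplectic inner product of the codewords of $x$ and $y$ is
\[
B(x,y)=\tr_{\F_q/\F_p}\Bigl(\textstyle\sum_{i=1}^n \langle \mu_i(x),\mu_i(y)\rangle\Bigr)=\sum_{i=1}^n \beta_i(x,y),
\]
where $\beta_i(x,y)=\tr\langle\mu_i(x),\mu_i(y)\rangle$ is an $\F_p$-bilinear form on $M$ with values in $\F_p$. The code is self-orthogonal iff $B\equiv 0$. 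I will therefore show, in each case, that the sum of the per-coordinate contributions vanishes identically.

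For part (a), zero coordinates contribute $\beta_i\equiv 0$, since $\langle 0,\cdot\rangle=0$. The remaining coordinates are grouped by equal maps: if a map $\mu$ occurs with multiplicity $m$, its total contribution is $m\cdot\tr\langle\mu(x),\mu(y)\rangle$. This is an element of $\F_p$ multiplied by the integer $m$, and it vanishes because $p\mid m$. Summing over the distinct maps gives $B\equiv 0$.

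For part (b), I show that each coordinate contributes zero on its own, so no cancellation is needed. If $\mu_i$ has $\F_q$-rank $\le 1$, its image lies in a single $\F_q$-line $\F_q e\subseteq\A_q$. Hence $\mu_i(x)=a e$ and $\mu_i(y)=b e$ with $a,b\in\F_q$, and
\[
\langle ae,be\rangle=ab\langle e,e\rangle=0
\]
because the form is alternating. Taking the trace preserves zero, so every $\beta_i\equiv 0$ and therefore $B\equiv 0$.

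The only delicate point is the meaning of "rank over $\F_q$" when the code is merely additive. If $\mu_i$ is only $\F_p$-linear, an $\F_p$-rank-one image is still contained in an $\F_q$-line, so the argument applies. I will interpret the hypothesis as requiring the image of each $\mu_i$ to lie in an $\F_q$-line of $\A_q$. This is exactly the case for the maps $\lambda_j$ in the construction, whose images are contained in $\F_q e$. Bilinearity of $\langle\cdot,\cdot\rangle$ over $\F_q$ then gives the scalar pull-out used above, and both parts are direct verifications with no real obstacle.
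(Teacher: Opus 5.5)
Your proof is correct and follows essentially the same argument as the paper. In (a), each block of $m$ equal coordinates contributes $m$ times an element of $\F_p$, which vanishes since $p\mid m$; in (b), each rank-$\le 1$ coordinate has image in an $\F_q$-line, on which the alternating form is identically zero, and your reading of the rank hypothesis as ``image contained in an $\F_q$-line'' is the one the paper uses.
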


\begin{proof}
(a) The form evaluates to $p\cdot(\text{something})=0$ blockwise. (b) The image of a rank-$\le 1$ coordinate is contained in a line, and the symplectic form, being alternating, vanishes identically on every line; hence every coordinate contributes $0$.
\end{proof}

In particular $C_+$ (with its $q\equiv 0\bmod p$ repeated identity coordinates) and $C_-$ (rank-one coordinates) are self-orthogonal: they are stabilizer codes with parameters $[[q+1,q-1]]_q$. Part (a) is the same mechanism by which Pllaha obtains self-orthogonality via $p$-fold concatenation \cite[Corollary~4.15]{Pllaha20}, and, as noted in Section~\ref{sec:related}, both parts apply to Dyshko's threshold-length pair \cite[Example~3]{Dyshko16}, which is monomially equivalent to $(C_-,C_+)$.

\begin{lemma}\label{lem:idle}
For a stabilizer code with projector $P$ on $n$ qudits let
\[
t(P) \;=\; \#\{\, i : P = Q\otimes I_{(i)}\ \text{for some operator } Q \,\}.
\]
Then: \emph{(a)} $t(P)$ equals the number of identically-zero coordinates of the associated additive code; \emph{(b)} $t$ is invariant under conjugation by arbitrary product unitaries $U_1\otimes\cdots\otimes U_n$ and qudit permutations.
\end{lemma}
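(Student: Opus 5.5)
For part (a), I would write the stabilizer projector explicitly as $P=q^{-(n-k)}\sum_{s\in S} s$, the average over the stabilizer group (with its signs/phases), and expand each element $s$ as a tensor product of single-qudit Paulis. The products $\bigotimes_j X^{a_j}Z^{b_j}$, with phases, form an operator basis of the full matrix algebra. So the expansion of $P$ in this basis has nonzero coefficients exactly on the labels in $C$; distinct group elements have distinct labels, since $S$ contains no nontrivial scalar.

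In one direction, suppose the $i$-th coordinate of every codeword is zero. Then every $s\in S$ has the form $s'\otimes I_{(i)}$, and $P=Q\otimes I_{(i)}$ with $Q$ the average of the $s'$.

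In the other direction, suppose $P=Q\otimes I_{(i)}$. Expand $Q$ in the Pauli basis on the remaining qudits. Then $P$ is supported only on basis elements whose $i$-th tensor factor is $I$. By uniqueness of the expansion, every label in $C$ has $i$-th entry zero. An alternative check is to take the partial trace over qudit $i$: $\operatorname{tr}_i(X^aZ^b)=0$ unless $(a,b)=0$, which gives the same conclusion.

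Hence $t(P)$ equals the number of identically-zero coordinates of $C$.

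For part (b), let $V=(U_1\otimes\cdots\otimes U_n)P_\pi$.

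\begin{itemize}
\item If $P=Q\otimes I_{(i)}$, then $P_\pi P P_\pi^{-1}=Q'\otimes I_{(\pi(i))}$ for some operator $Q'$, obtained by relabelling tensor factors.
\item Conjugating by the product unitary gives $(U'QU'^{\dagger})\otimes (U_{\pi(i)}IU_{\pi(i)}^\dagger)=Q''\otimes I_{(\pi(i))}$, where $U'$ is the product of the other $U_j$.
\end{itemize}

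So $i\mapsto\pi(i)$ injects the idle set of $P$ into the idle set of $VPV^\dagger$. Applying the same argument to $V^{-1}$, which is of the same form, gives equality of $t$.

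I do not expect any real obstacle. The only point needing care is the uniqueness of the Pauli expansion of $P$ in (a) for prime-power $q$. I would handle it by noting that, under the trace construction of \cite{Ketkar06}, the $q^{2}$ operators $X^aZ^b$ on one qudit are orthogonal with respect to the Hilbert--Schmidt inner product, so the tensor products form a basis. Different elements of $S$ give different basis elements because phases are absorbed into the coefficients, and no cancellation occurs since each label appears exactly once.
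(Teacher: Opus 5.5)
Your proof is correct and is essentially the paper's. Part (b) and the forward direction of (a) are the same. For the converse of (a), you read off the support of $P=|S|^{-1}\sum_{s\in S}s$ from the uniqueness of its Pauli-basis expansion, using that distinct elements of $S$ have distinct labels. The paper instead partial-traces the identity $gP=P$, i.e.\ $(g'Q)\otimes g_i=Q\otimes I$, and uses that nonidentity Paulis are traceless and $Q\neq 0$.

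Both arguments rest on the same trace-orthogonality of Paulis. One caveat: your one-line partial-trace ``alternative'' applied to $P$ itself only yields that the elements of $S$ with nonzero $i$-th label sum to zero. It therefore still needs your linear-independence step (or the paper's $gP=P$ trick) to conclude.
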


\begin{proof}
(b) is clear from the defining tensor-factorization property. For (a): if all stabilizer elements have identity (label $0$) at position $i$, then averaging $P=|S|^{-1}\sum_{g\in S} g$ exhibits $P=Q\otimes I_{(i)}$. Conversely suppose $P=Q\otimes I_{(i)}$ and let $g=g'\otimes g_i\in S$ (every Pauli operator factorizes). Since $g$ fixes every codeword, $gP=P$, i.e., $(g'Q)\otimes g_i = Q\otimes I$. Taking the partial trace over qudit $i$ gives $\tr(g_i)\cdot g'Q = q\,Q$; as nonidentity Pauli factors are traceless and $Q\ne 0$, we get $g_i$ proportional to $I$, so the label of $g$ at $i$ vanishes.
\end{proof}

\begin{theorem}\label{thm:family}
Let $q$ be any prime power and let $C_\pm, f$ be as above. Then:
\begin{enumerate}
\item[(i)] $f$ is an additive, weight-preserving bijection: every nonzero element of either code has weight exactly $q$;
\item[(ii)] $f$ extends to no monomial transformation, and indeed $C_+$ and $C_-$ are not monomially equivalent by any map;
\item[(iii)] the codespaces of $C_+$ and $C_-$ are not related by any product unitary combined with any permutation of the qudits;
\item[(iv)] $C_+$ and $C_-$ have equal Shor--Laflamme enumerators $A(z)=1+(q^2-1)z^q$, hence (by the quantum MacWilliams identity) equal dual enumerators $B(z)$.
\end{enumerate}
Consequently $\mathrm{QEP}(q,q+1)$ fails for every prime power $q$.
\end{theorem}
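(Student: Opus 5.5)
The argument goes part by part; each part is a short computation once the right invariant is named.

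\emph{Part (i).} Take $0\ne x\in M$. The codeword of $x$ in $C_+$ has $x$ in each of the first $q$ positions and $0$ in the last, so its weight is $q$. The codeword of $x$ in $C_-$ has $\lambda_j(x)=0$ exactly when $x\in L_j$. Since $x$ lies on exactly one of the $q+1$ lines $L_0,\dots,L_q$, exactly one coordinate vanishes, and the weight is again $q$. Additivity and bijectivity were already noted after the definition of $C_\pm$.

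\emph{Part (iii).} I will use the idle-qudit invariant of Lemma~\ref{lem:idle}. $C_+$ has the identically-zero coordinate $q+1$, so $t=1$. In $C_-$ each $\lambda_j$ has rank one and so is not identically zero; hence $t=0$. By Lemma~\ref{lem:idle}(b), $t$ is preserved by every product unitary combined with a permutation. Since $1\ne 0$, no such operator carries one codespace to the other.

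\emph{Part (ii).} The same count settles monomial equivalence. A monomial transformation permutes coordinates and applies an automorphism in each coordinate, so it preserves the number of coordinates on which a code vanishes identically. This holds even when the coordinate automorphisms are arbitrary $\F_p$-linear automorphisms, as promised in Definition~\ref{def:monomial}. That number is $1$ for $C_+$ and $0$ for $C_-$, so no monomial map sends $C_+$ onto $C_-$. In particular $f$ does not extend.

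A second check uses the orbit criterion, Lemma~\ref{lem:orbit}. The coordinate maps of $C_+$ are the zero map and $q$ copies of the identity, with kernels $0$ and $M$. The coordinate maps of $C_-$ have kernels $L_0,\dots,L_q$. Post-composition by any automorphism preserves kernels, so the two multisets of orbits are different.

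\emph{Part (iv).} By (i), each code consists of $0$ together with $q^2-1$ nonzero words, all of weight $q$. So both enumerators equal $A(z)=1+(q^2-1)z^q$. The quantum MacWilliams identity then forces the dual enumerators $B(z)$ to agree, since each $B$ is determined by the corresponding $A$.

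\emph{Conclusion.} Self-orthogonality of both codes is Lemma~\ref{lem:selforth}. Together with (i) and (ii), this gives a counterexample to $\mathrm{QEP}(q,q+1)$.

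\emph{Main obstacle.} The only delicate point is Lemma~\ref{lem:idle}(a), which identifies $t$ with the number of zero coordinates. That lemma is already proved, so the theorem reduces to the counting above.
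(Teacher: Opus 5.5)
Your proposal is correct and follows essentially the same route as the paper. Part (i) rests on the fact that a nonzero $x$ lies on exactly one of the $q+1$ lines, parts (ii) and (iii) on the count of identically-zero coordinates and its codespace form $t$ from Lemma~\ref{lem:idle}, and part (iv) follows directly from (i), with your extra orbit-criterion check being a harmless redundancy (note the zero map has kernel $M$ and the identity has kernel $0$, the reverse of the order in which you wrote them).
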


\begin{proof}
(i) A nonzero $x$ gives a $C_+$-codeword of weight $q$ (its $q$ repeated copies). On the $C_-$ side, $\lambda_j(x)\ne 0$ iff $x\notin L_j$, and a nonzero $x$ lies on exactly one of the $q+1$ lines, so the weight is again $q$. (ii)--(iii) The additive code underlying $C_+$ has one identically-zero coordinate while that of $C_-$ has none; monomial maps preserve this count, proving (ii). For (iii), by Lemma~\ref{lem:idle} the numbers $t=1$ (for $C_+$) and $t=0$ (for $C_-$) are invariants of the codespaces under product unitaries and permutations. (iv) is immediate from (i) and $|C_\pm|=q^2$.
\end{proof}

\begin{example}[$q=2$: the minimal quantum counterexample]\label{ex:q2}
With $\F_4=\{0,1,\omega,\omega^2\}$ and the three $\F_2$-lines $\langle 1\rangle,\langle\omega\rangle,\langle\omega^2\rangle$ as kernels,
\[
C_+=\{000,\ \underbrace{110}_{XXI\ (x=1)},\ \omega\omega 0,\ \omega^2\omega^2 0\},\qquad
C_-=\{000,\ 011,\ 101,\ 110\},
\]
i.e., $S_+=\langle XXI,\,ZZI\rangle$, $S_-=\langle IXX,\,XIX\rangle$, with
\[
f\colon XXI\mapsto IXX,\quad ZZI\mapsto XIX,\quad YYI\mapsto XXI.
\]
All six nonidentity elements have weight $2$; $f$ preserves weights but no local Clifford circuit and qubit permutation implements it, and none can even map one codespace to the other: the codespace of $S_+$ is $\ket{\Phi^+}_{12}\otimes\mathbb{C}^2_3$, which contains no product state, while the codespace of $S_-$ is $\mathrm{span}\{\ket{{+}{+}{+}},\ket{{-}{-}{-}}\}\ni\ket{{+}{+}{+}}$---a second, self-contained proof of (iii) for $q=2$. Both codes have $A(z)=1+3z^2$ and $B(z)=1+3z+3z^2+9z^3$ (computed directly; see Section~\ref{sec:verification}).
\end{example}

\begin{corollary}[Arbitrarily many weight-isometric, locally inequivalent codes]
For every $m\ge 1$ the $m+1$ codes $C_+^{\oplus a}\oplus C_-^{\oplus(m-a)}$, $a=0,\dots,m$, are $[[3m,m]]$ stabilizer codes (for $q=2$) that are pairwise weight-isometric, share the enumerator $A(z)=(1+3z^2)^m$, and are pairwise inequivalent under product unitaries and permutations (they have distinct invariants $t=a$).
\end{corollary}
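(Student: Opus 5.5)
The corollary has three claims, and I will check each one on the direct sums, using that every relevant invariant is additive or multiplicative over direct sums.

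\emph{Parameters and self-orthogonality.} Let $C_a=C_+^{\oplus a}\oplus C_-^{\oplus(m-a)}\le\F_4^{3m}$, where each summand occupies its own block of three coordinates. The trace-Hermitian form is a sum of coordinatewise contributions, so the form of two codewords of $C_a$ is the sum of the blockwise forms. Each blockwise form vanishes by Lemma~\ref{lem:selforth}, or simply by the self-orthogonality of $C_\pm$ established in Section~\ref{sec:family}. Hence $C_a$ is self-orthogonal. Its size is $|C_a|=4^m=2^{3m-m}$, so $C_a$ is a $[[3m,m]]$ code.

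\emph{Weight isometry and enumerator.} The Hamming weight of a codeword is the sum of the weights of its blocks. For any $a,b$ I define $f_{a,b}\colon C_a\to C_b$ blockwise. On the blocks where both codes carry the same summand ($C_+$ to $C_+$, or $C_-$ to $C_-$), the map is the identity. On the remaining blocks I pair the extra $C_+$ blocks of one code with the extra $C_-$ blocks of the other; there are equally many, $|a-b|$, of each. On these paired blocks the map is $f$ or $f^{-1}$ from Theorem~\ref{thm:family}.

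This $f_{a,b}$ is an additive bijection. It preserves weight because it preserves weight on every block, by Theorem~\ref{thm:family}(i). The enumerator of a direct sum is the product of the enumerators of its summands, so Theorem~\ref{thm:family}(iv) gives $A(z)=(1+3z^2)^m$ for every $a$.

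\emph{Local inequivalence.} I apply Lemma~\ref{lem:idle}. The number of identically-zero coordinates of $C_a$ is additive over blocks. A $C_+$ block contributes $1$ (its third coordinate is identically zero), and a $C_-$ block contributes $0$ (its coordinate maps are nonzero rank-one maps). So $t(C_a)=a$, and by Lemma~\ref{lem:idle}(a) this number is the invariant $t$ of the codespace.

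Lemma~\ref{lem:idle}(b) says $t$ is invariant under product unitaries combined with permutations. Distinct values of $a$ give distinct values of $t$, so the codespaces are pairwise inequivalent under such operators.

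There is no real obstacle here. The only point needing care is the definition of the isometry $f_{a,b}$ when the numbers of $C_+$ blocks differ, and the mixed pairing above handles that.
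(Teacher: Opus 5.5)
Your proof is correct and is essentially the argument the paper leaves implicit (it states the corollary without a separate proof, pointing only to $t=a$). You use blockwise self-orthogonality and weight preservation, with $f$ or $f^{-1}$ on the mismatched blocks, multiplicativity of $A(z)$ over direct sums, and additivity of the zero-coordinate count together with Lemma~\ref{lem:idle} to separate the codespaces.
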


\begin{remark}
Restricting to $\F_4$-linear codes and $\F_4$-linear isometries the extension property holds (MacWilliams over the field $\F_4$), and by a result of Rains \cite{Rains99} the automorphisms of linear stabilizer codes of distance $>2$ are local Clifford. Our $C_+$ is $\F_4$-linear while $C_-$ is not; genuine additivity is essential to the counterexamples, as the classical theory predicts.
\end{remark}

\section{Minimality, uniqueness, and a census at length three}\label{sec:census}

The results of this section are computational; the search is exhaustive, the code elementary (a few hundred lines of dependency-free Python), and all claims were double-checked against the structural criteria of Lemmas~\ref{lem:orbit}--\ref{lem:kernels} where applicable.

\begin{theorem}\label{thm:census}
Let $q=2$.
\begin{enumerate}
\item[(i)] $\mathrm{QEP}(2,1)$ and $\mathrm{QEP}(2,2)$ hold: among all $4$ (resp.\ $31$) trace-Hermitian self-orthogonal additive codes in $\F_4^1$ (resp.\ $\F_4^2$), every Hamming-weight-preserving additive isomorphism between two of them extends to a monomial map. Hence $n=3$ is the minimal length at which the quantum extension property fails.
\item[(ii)] In $\F_4^3$ there are $514$ self-orthogonal additive codes, $5832$ non-extendable weight-preserving isometries, and $486$ unordered pairs of codes supporting at least one such isometry. Every one of these pairs is a monomial image of the canonical pair $(C_+,C_-)$ of Example~\ref{ex:q2}; the monomial orbits of $C_+$ and $C_-$ have sizes $18$ and $27$, and all $18\times 27=486$ cross pairs occur.
\item[(iii)] Every counterexample at length $3$ is detected by the idle-qubit invariant: all $5832$ isometries have $\{t(C),t(D)\}=\{0,1\}$.
\item[(iv)] Exactly $72$ of the $5832$ isometries preserve, in addition to the weight, the full Pauli-type profile of every element (the multiset of symbols from $\{X,Y,Z\}$ appearing in it); one such is displayed in Theorem~\ref{thm:census-intro}.
\end{enumerate}
\end{theorem}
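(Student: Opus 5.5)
The census is computational, so the proof will be an exhaustive enumeration, cross-checked by the kernel criterion. I will carry it out in four steps.

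\textbf{Step 1: enumerate the codes.} For each $n\le 3$ I list all $\F_2$-subspaces of $\F_4^n\cong\F_2^{2n}$, using reduced row-echelon bases of dimension $0,\dots,n$. I keep those on which the trace-Hermitian form vanishes on all pairs of basis vectors. This should give $4$, $31$ and $514$ codes for $n=1,2,3$. The counts can be checked independently against the standard formula for the number of totally isotropic subspaces of a $2n$-dimensional symplectic space over $\F_2$, namely $\prod_{i=0}^{k-1}(2^{2(n-i)}-1)/(2^{i+1}-1)$ summed over $k$. For $n=2$ this gives $1+15+15=31$, and for $n=3$ it gives $1+63+315+135=514$.

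\textbf{Step 2: enumerate and test the isometries.} For each ordered pair $(C,D)$ of equal dimension $k$, I fix a basis of $C$ and range over all $\F_2$-linear bijections $C\to D$, i.e.\ all images of the basis forming a basis of $D$. I test weight preservation on all $2^k$ elements. For each weight-preserving $f$, I decide extendability by Lemmas~\ref{lem:orbit} and~\ref{lem:kernels}. Writing $M=\F_2^k$, the coordinate functionals of $C$ and of $D$ give two $n$-tuples of maps into $\F_4$, and $f$ extends if and only if their multisets of kernels agree. As a second check I brute-force all $6^n n!$ monomial maps, which is $1296$ for $n=3$.

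For $n\le 2$ I expect no failures. This can also be argued by hand: for $k\le 1$ the weight of the single nonzero element determines the number of nonzero coordinates, and Lemma~\ref{lem:kernels} finishes. For $k=2$ with $n=2$ the codes are self-dual, and the kernel multiset is forced by the weights through inclusion--exclusion over the three nonzero elements, unless some kernel is a line and another is $0$. I would let the computation settle these residual cases.

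\textbf{Step 3: the $n=3$ claims.} Here I count the non-extendable isometries, expecting $5832$, and the unordered pairs carrying them. For (ii), I compute the monomial orbits of $C_+$ and $C_-$ under the group of order $1296$, obtaining sizes $1296/|\mathrm{Stab}|$ of $18$ and $27$. I then check that the set of bad pairs equals the set of cross pairs of the two orbits, which yields $486$.

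A structural consistency check is available. By Lemma~\ref{lem:kernels}, a failure at $k=2$ needs two kernel multisets with equal counts $\#\{i:x\notin\ker\lambda_i\}$ for all $x$. Over the lines of $\F_2^2$, the multisets $\{0,0,\F_2^2\}$ and $\{L_0,L_1,L_2\}$ are the only coincidence of the orbit-indicator sums at length $3$, which is the Wood dependency. This explains why only the canonical pair appears.

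Part (iii) is then immediate: every bad pair has one member in each orbit, and $t$ is $1$ on the orbit of $C_+$ and $0$ on the orbit of $C_-$. For (iv) I filter the bad isometries by comparing the multiset of Pauli letters of $u$ and $f(u)$ for every $u$, expecting $72$.

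\textbf{Main obstacle.} The main obstacle is the reliability of the counts rather than the conceptual content. I would guard it by the two independent extendability tests and the closed-form count of self-orthogonal codes. The exact figures $5832$ and $72$ can only be confirmed computationally, though $5832=486\cdot 12$ suggests $12$ bad isometries per pair. Plausibly this is all $\GL(2,2)$-reparametrizations compatible with the weights, a factor $6$, times the $2$ orientations within an unordered pair, which I would verify.
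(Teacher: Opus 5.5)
Your proposal is correct and follows the paper's own route: an exhaustive computer enumeration of the self-orthogonal codes and of all weight-preserving isometries for $n\le 3$, with extendability decided by brute force over monomial maps and cross-checked against the kernel criterion of Lemmas~\ref{lem:orbit} and~\ref{lem:kernels}. Your extra checks are sound (the isotropic-subspace count $1+63+315+135=514$, and the decomposition $5832=486\cdot 2\cdot 6$). However, your structural remark covers only $k=2$, so excluding self-dual ($k=3$) counterexamples at $n=3$ still rests on the enumeration, or on condition~\eqref{eq:selforth} of Lemma~\ref{lem:relations}.
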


Part (iv) deserves emphasis: even the finest per-element local data one could reasonably write down---which Pauli letters appear, with multiplicity, in each stabilizer element---does not force implementability by local operations, already at three qubits. (Recall from Lemma~\ref{lem:transitive} that the type profile is not invariant under local Cliffords, so type-preservation is a hypothesis stronger than what local equivalence would demand; failure of extension under the stronger hypothesis is a stronger conclusion.)

\section{Full-support counterexamples}\label{sec:fullsupport}

All counterexamples so far are caught by the invariant $t$ of Lemma~\ref{lem:idle}: one code has an idle qudit, the other does not. One might hope that the failure of the quantum extension property is only this essentially bookkeeping phenomenon, i.e., that QEP is restored for full-support codes (every qudit is acted on nontrivially by some stabilizer element, equivalently $t=0$). In this section, $q=2$; write $k=\dim_{\F_2} C$ for the stabilizer dimension.

We first note that in the smallest message spaces the hope is justified:

\begin{proposition}\label{prop:k2}
For $q=2$ there is no full-support counterexample with $k\le 2$, at any length. (For $k\le 1$ this is trivial; the content is $k=2$.)
\end{proposition}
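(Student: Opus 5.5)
The plan is to reduce everything to the kernel criterion of Lemma~\ref{lem:kernels} together with the orbit criterion of Lemma~\ref{lem:orbit}. I will show that, for message space $M=\F_2^2$, the weight function of a full-support parameterized code determines its multiset of coordinate kernels. Self-orthogonality should play no role: the argument is meant to work for arbitrary full-support additive codes of $\F_2$-dimension $2$, and a fortiori for stabilizer codes.

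\emph{Setup.} Let $f\colon C\to D$ be a weight-preserving additive bijection between codes $C,D\le\F_4^n$ with $\dim_{\F_2}C=2$, both of full support. Encode $f$ as a pair $(M,\Lambda)$, $(M,\Lambda')$ of parameterized codes with $M=\F_2^2$. Full support means that every $\lambda_i$ and every $\lambda_i'$ is nonzero. The kernel of a nonzero $\lambda\in\Hom(M,\F_4)$ is then either $0$ (rank $2$) or one of the three lines $L_1,L_2,L_3$ of $M$. For $\Lambda$, let $a$ be the number of rank-two coordinates and $b_j$ the number of coordinates with kernel $L_j$, so $a+b_1+b_2+b_3=n$; define $a',b_j'$ analogously for $\Lambda'$.

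\emph{Weight computation.} Each nonzero $x\in M$ lies on exactly one line $L_{j(x)}$. The $i$-th coordinate of the codeword of $x$ vanishes precisely when $x\in\ker\lambda_i$, and for nonzero $x$ this happens exactly for the coordinates with kernel $L_{j(x)}$. Hence
\[
\wt(\Lambda(x)) \;=\; n-b_{j(x)}, \qquad \wt(\Lambda'(x)) \;=\; n-b'_{j(x)}.
\]
As $x$ runs over the three nonzero vectors, $j(x)$ runs over $1,2,3$, so weight preservation gives $b_j=b_j'$ for every $j$. Then $a=n-\sum_j b_j=n-\sum_j b_j'=a'$. Thus $\Lambda$ and $\Lambda'$ have the same multiset of kernels. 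By Lemma~\ref{lem:kernels} this is the same multiset of $\GL(2,2)$-orbits, and Lemma~\ref{lem:orbit} then yields a monomial extension of $f$.

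\emph{The case $k\le 1$.} Here every nonzero coordinate map on $M=\F_2$ is injective, so all coordinates lie in a single orbit, and full support on both sides forces the multisets of orbits to agree.

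The only delicate point, which is where the counterexamples of Theorem~\ref{thm:family} live, is the bookkeeping of zero coordinates. A zero coordinate has kernel $M$ and contributes nothing to any weight. Without full support the identity $a+\sum_j b_j=n$ fails: one only gets $a+z+\sum_j b_j=n$, where $z$ counts zero coordinates, and $a$ and $z$ can no longer be separated. For instance, for $C_+$ we have $(a,z,b)=(2,1,0)$, while for $C_-$ we have $(0,0,1,1,1)$. The proof must therefore invoke full support exactly at this step, and that is the step I would check most carefully.
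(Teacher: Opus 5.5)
Your proof is correct and is essentially the paper's argument. The paper encodes the same kernel multiplicities and notes that the only relation among the indicator vectors of the nonzero kernel classes is $2\cdot(1,1,1)=(0,1,1)+(1,0,1)+(1,1,0)$; it then kills that relation with the equal-length condition. Substituting $n=a+\sum_j b_j$ (valid by full support) to get $\wt=n-b_{j(x)}$, as you do, is the same computation done directly.
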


\begin{proof}
Encode a weight-preserving isometry by multiplicity functions $\eta_\pm$ on the post-composition orbits of $\Hom(\F_2^2,\F_4)$, which by Lemma~\ref{lem:kernels} are labelled by kernels: the zero subspace (injective maps), the three lines, and the full space (the zero map, i.e., an idle coordinate). Weight-preservation of $f$ says the difference $\delta=\eta_+-\eta_-$ lies in the kernel of the map sending a kernel-class $K$ to the function $x\mapsto [x\notin K]$ on $\F_2^2\setminus\{0\}$. The five indicator functions are, in the basis of the three nonzero points: $(1,1,1)$ (injective), $(0,1,1)$, $(1,0,1)$, $(1,1,0)$ (lines), $(0,0,0)$ (zero map). The only relation among the nonzero-kernel classes is $2\cdot(1,1,1)=(0,1,1)+(1,0,1)+(1,1,0)$, so a full-support $\delta$ (no zero class) is an integer multiple $a$ of $(2,-1,-1,-1)$. Then $\sum\delta=-a$, while equal code lengths force $\sum\delta=0$; hence $a=0$, $\delta=0$, and $f$ extends by Lemma~\ref{lem:orbit}.
\end{proof}

The proof shows exactly where the idle qudit in Section~\ref{sec:family} comes from: the unique relation available at $k=2$ is unbalanced, and the balancing column is forced to be idle. At $k=3$ the space of relations is larger and the phenomenon becomes genuinely non-trivial. Parameterizing as in Proposition~\ref{prop:k2} (kernels now range over the $7$ lines and $7$ planes of $\F_2^3$), a short computation, which we record as a lemma since it drives the search, gives:

\begin{lemma}\label{lem:relations}
For $k=3$, $q=2$, the differences $\delta=\eta_+-\eta_-$ realizable by full-support, equal-length, weight-preserving isometry data are exactly: choose $\nu\colon\{\text{planes}\}\to\Z$ with $\sum_P\nu(P)=0$, and set $\delta(\F_2 v)=-\sum_{P\ni v}\nu(P)$, $\delta(P)=\nu(P)$. Moreover a side $\eta\in\{\eta_+,\eta_-\}$ is realizable by a self-orthogonal code iff
\begin{equation}\label{eq:selforth}
\sum_{\text{lines }\F_2 v} \eta(\F_2 v)\cdot v \;=\; 0 \quad\text{in }\F_2^3,
\end{equation}
and in that case every choice of coordinate maps with the prescribed kernels (rank-one maps taken with image a line) yields a self-orthogonal code.
\end{lemma}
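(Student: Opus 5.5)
The plan is to read off both parts of Lemma~\ref{lem:relations} from the kernel description of orbits (Lemma~\ref{lem:kernels}), exactly as in the proof of Proposition~\ref{prop:k2}, now with $M=\F_2^3$.

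\emph{Relations.} Since $\dim M=3>2=\dim\F_4$, no coordinate map is injective, so the possible kernels are lines (rank-two maps), planes (rank-one maps), and $M$ itself (zero maps). Full support excludes the kernel $M$. Weight preservation says that for every nonzero $x$,
\[
\sum_K \delta(K)\,[x\notin K]=0 .
\]
Equal length says $\sum_K\delta(K)=0$. Subtracting the first identity from the second gives the equivalent system $\sum_{K\ni x}\delta(K)=0$ for all $x\ne0$. The lines containing $x$ reduce to the single line $\F_2x$, so this reads $\delta(\F_2 x)=-\sum_{P\ni x}\delta(P)$. Setting $\nu=\delta|_{\text{planes}}$, the line values are therefore forced as stated.

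It remains to re-impose the length condition. Each plane contains $3$ nonzero points, so
\[
\sum_{\text{lines}}\delta=-3\sum_P\nu(P), \qquad\text{hence}\qquad \sum\delta=-2\sum_P\nu(P).
\]
Thus the length condition is exactly $\sum_P\nu(P)=0$. Conversely, every such $\nu$ arises: write $\delta=\eta_+-\eta_-$ with $\eta_\pm\ge0$. If joint injectivity is needed, add a common collection of rank-two coordinates whose kernels meet in $0$ (for instance three distinct lines not in a common plane) to both sides; this does not change $\delta$. The resulting isometry is weight-preserving by construction.

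\emph{Self-orthogonality.} For a code given by coordinate maps $\lambda_i$, the trace-symplectic form on codewords of $x,y$ is
\[
\beta(x,y)=\sum_i\omega(\lambda_i x,\lambda_i y),
\]
where $\omega$ is the unique nonzero alternating form on $\F_2^2$. Rank-one coordinates contribute $0$ by Lemma~\ref{lem:selforth}(b). A rank-two coordinate with kernel $\F_2v$ induces an isomorphism $M/\F_2v\to\F_2^2$, and $\omega$ pulls back along it to the unique nonzero alternating form on $M$ with radical $\F_2v$. This pullback does not depend on the choice of $\lambda_i$ within its orbit, because $\GL(2,2)=\SL(2,2)$ fixes $\omega$.

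Next identify alternating forms on $\F_2^3$ with vectors via $\beta_w(x,y)=\det(w,x,y)$. This is the isomorphism $\Lambda^2(M)^*\cong M$, and the form with radical $\F_2v$ is $\beta_v$. By linearity,
\[
\beta=\beta_w,\qquad w=\sum_{\text{lines}}\eta(\F_2v)\,v,
\]
and $\beta$ vanishes iff $w=0$. This is \eqref{eq:selforth}. It also shows that the conclusion holds for every choice of coordinate maps with the prescribed kernels.

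The only step needing care is this identification of the pulled-back form with $\det(v,\cdot,\cdot)$, together with its independence of the chosen $\lambda_i$. I would verify it by a direct check in a basis adapted to $v$; the rest is bookkeeping.
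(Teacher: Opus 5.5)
Your proposal is correct and follows essentially the same route as the paper. The first part is the indicator-relation linear algebra of Proposition~\ref{prop:k2}, which you carry out more explicitly than the paper does: you derive $\sum\delta=-2\sum_P\nu(P)$, and you realize every admissible $\nu$ by padding both sides with common rank-two coordinates (two distinct lines already suffice for joint injectivity). The second part is the same pullback argument as the paper's: each rank-two coordinate contributes $\det(v,\cdot,\cdot)$, rank-one coordinates contribute nothing, and linearity in $v$ then gives \eqref{eq:selforth} for every choice of maps with the prescribed kernels.
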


\begin{proof}
The first statement is linear algebra on the indicator relations $[x\notin K]$, as in Proposition~\ref{prop:k2}. For the second: rank-one coordinates contribute nothing to the form (Lemma~\ref{lem:selforth}(b) applies coordinatewise: their images are lines, on which the alternating form vanishes). A rank-two coordinate $\lambda$ with $\ker\lambda=\F_2 v$ pulls the symplectic form of $\F_4$ back to a nonzero alternating bilinear form on $\F_2^3$ with radical exactly $\F_2 v$; since the alternating forms on $\F_2^3$ with radical containing a given line form a one-dimensional space, this pullback is independent of the choice of $\lambda$ and equals $B_v=\det(v,\cdot,\cdot)$. Self-orthogonality of the code is $\sum_i B_{v_i}=0$, i.e., $B_{\sum_i v_i}=0$ by linearity of $\det$ in its first argument, i.e., \eqref{eq:selforth}.
\end{proof}

For each fixed length $n$ the parameter space of Lemma~\ref{lem:relations} is finite (multisets of $n$ kernel classes on each side), and we searched it exhaustively---over \emph{all} kernel multisets, with joint injectivity and condition~\eqref{eq:selforth} imposed on both sides and equivalence taken up to the natural $\GL(3,2)$ re-parameterization of the message space---for every $n\le 6$. The outcome is Theorem~\ref{thm:fullsupport-intro}, restated here with the verified data.

\begin{theorem}\label{thm:fullsupport}
Let $q=2$.
\begin{enumerate}
\item[(i)] The minimal length of a full-support self-orthogonal counterexample is $n=4$ (minimality holds over all stabilizer dimensions, by Theorem~\ref{thm:census}(i,iii) at $n\le 3$). An explicit instance is the pair of $[[4,1]]$ codes
\[
C=\langle IIXX,\,XXIX,\,ZZXI\rangle,\qquad D=\langle XXII,\,YYIX,\,ZZXI\rangle,
\]
with $f\colon IIXX\mapsto XXII$, $XXIX\mapsto YYIX$, $ZZXI\mapsto ZZXI$: both codes are self-orthogonal with full support, all element weights match (enumerator $1+2z^2+4z^3+z^4$ on both sides), and a brute-force check over all $6^4\cdot 4!=31104$ monomial maps confirms that $f$ extends to none. Here $C$ and $D$ happen to be monomially equivalent as codes; the counterexample is the isometry, not the pair.
\item[(ii)] The minimal length of a full-support weight-isometric pair that is not monomially equivalent, among counterexamples with $k\le 3$, is $n=5$: at $n=4$ the exhaustive search finds no such pair, while at $n=5$ there are exactly $168$ weight-isometry classes of full-support data (up to the $\GL(3,2)$ action) containing monomially inequivalent self-orthogonal pairs. An explicit instance is the pair of $[[5,2]]$ codes
\[
S_A=\langle ZZZZI,\,XXIIX,\,IIXXX\rangle,\qquad
S_B=\langle ZZIZX,\,XIZXI,\,IXXYI\rangle,
\]
with the weight-preserving isometry $f$ matching the generators in order. Both are self-orthogonal with full support ($t=0$ on both sides) and common enumerator $A(z)=1+2z^3+3z^4+2z^5$, yet no monomial transformation maps $S_A$ onto $S_B$: the multisets of coordinate kernels are, for $S_A$, the lines $\langle e_3\rangle,\langle e_3\rangle,\langle e_2\rangle,\langle e_2\rangle$ and the plane $\{x_2+x_3=0\}$, and, for $S_B$, the lines $\langle e_3\rangle,\langle e_2\rangle,\langle e_1\rangle,\langle(1,1,1)\rangle$ and the plane $\{x_1=0\}$---two essentially different solutions of \eqref{eq:selforth} (a doubled pair of lines versus four distinct lines summing to zero), inequivalent under $\GL(3,2)$. Non-equivalence was additionally confirmed by raw brute force over all $6^5\cdot 5!=933120$ monomial maps. Stabilizer dimension $k\le 2$ is impossible by Proposition~\ref{prop:k2}.
\item[(iii)] At $n=6$ there is such a pair all of whose nontrivial elements have weight $\ge 4$: the $[[6,3]]$ codes
\[
S_+=\langle XXIIXX,\,IIXXXX,\,ZZZZIX\rangle,\qquad
S_-=\langle XXZXII,\,ZZXYII,\,IZZZXX\rangle,
\]
with the weight-preserving isometry $f$ matching the generators in order, common enumerators $A(z)=1+3z^4+4z^5$ and (necessarily) equal dual enumerators, and no monomial transformation mapping $S_+$ onto $S_-$.
\end{enumerate}
\end{theorem}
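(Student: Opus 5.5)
The plan is to verify each part through the kernel description of Lemmas~\ref{lem:orbit}--\ref{lem:kernels}, with brute force used only as a cross-check.

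\emph{Part (i).} Minimality at $n\le 3$ is already settled. Theorem~\ref{thm:census}(i) rules out $n\le 2$ entirely. At $n=3$, Theorem~\ref{thm:census}(iii) says every counterexample has an idle qubit on one side, so none is full-support. For the explicit instance I take $M=\F_2^3$ with basis $e_1,e_2,e_3$ sent to the three listed generators. I then read off the coordinate maps $\lambda_i,\lambda_i'\colon M\to\F_4$ column by column.

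\begin{itemize}
\item \emph{Self-orthogonality.} I check the three generators pairwise commute on each side.
\item \emph{Full support.} Every column must be nonzero.
\item \emph{Weight preservation.} I compare weights over all $7$ nonzero $x\in M$, which gives enumerator $1+2z^2+4z^3+z^4$ on both sides.
\item \emph{Non-extension.} I compute the multisets of column kernels $\ker\lambda_i$ and $\ker\lambda_i'$ and show they differ. Lemma~\ref{lem:kernels} then reduces Lemma~\ref{lem:orbit} to exactly this comparison.
\item \emph{$C$ and $D$ are monomially equivalent.} I exhibit an explicit automorphism of $M$ that carries one kernel multiset to the other.
\end{itemize}

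\emph{Part (ii).} Proposition~\ref{prop:k2} disposes of $k\le 2$, so only $k=3$ remains. I use Lemma~\ref{lem:relations} to enumerate all admissible data at $n=4$. On each side this is a multiset of $4$ kernel classes, with these constraints:
\begin{itemize}
\item the classes are lines or planes, since a zero map would be idle and the trivial kernel is impossible in rank $\le2$ images from $\F_2^3$;
\item the multiset is jointly injective;
\item it satisfies \eqref{eq:selforth};
\item the difference $\delta$ is of the stated form $\delta(P)=\nu(P)$, $\delta(\F_2v)=-\sum_{P\ni v}\nu(P)$.
\end{itemize}
For each solution pair I test whether the two multisets lie in a common $\GL(3,2)$-orbit. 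That is exactly monomial equivalence of the codes, because a monomial equivalence composed with $f$ changes the parameterization by an element of $\GL(M)$. The expectation is that every $n=4$ pair is $\GL(3,2)$-equivalent.

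For the $n=5$ instance I compute the kernel multisets as stated. $S_A$ has two doubled lines plus a plane; $S_B$ has four distinct lines summing to zero plus a plane. These cannot be $\GL(3,2)$-equivalent because line multiplicities are a $\GL$-invariant. Weights and self-orthogonality are checked directly.

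\emph{Part (iii).} I handle this the same way. I check self-orthogonality and the weight $\ge4$ condition over the $7$ nonzero elements of each code. Then I compute the kernel multisets on both sides and distinguish their $\GL(3,2)$-orbits by a coarse invariant, namely the line/plane counts with multiplicities. Equal dual enumerators follow from equal $A$ via the quantum MacWilliams identity.

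The main obstacle is the exhaustive $n=4$ nonexistence claim in (ii). I would organize it by $\nu$. With $4$ columns and $\sum\nu=0$, only small $\nu$ are possible, so the case split is on the number of plane columns on each side, which is $0$ to $4$. Within each case I solve \eqref{eq:selforth} up to $\GL(3,2)$, and I would confirm the result by computer search.
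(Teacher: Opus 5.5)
Your plan follows the paper's route. You verify the displayed generators directly, decide extension and equivalence through coordinate-kernel multisets modulo $\GL(3,2)$ via Lemmas~\ref{lem:orbit}--\ref{lem:kernels}, separate the inequivalent pairs by a $\GL(3,2)$-invariant of those multisets, and settle $n=4$ by enumerating the data of Lemma~\ref{lem:relations}. The argument is sound; three remarks.

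\textbf{The invariant in (iii).} It must be the multiplicity pattern, not the numbers of line and plane columns, which coincide (four and two on each side). $S_+$ has kernels $\langle e_2\rangle,\langle e_2\rangle,\langle e_1\rangle,\langle e_1\rangle$ and the two distinct planes $\{x_1+x_2=0\}$, $\{x_1+x_2+x_3=0\}$. $S_-$ has the four distinct lines $\langle e_3\rangle,\langle e_2+e_3\rangle,\langle e_1+e_3\rangle,\langle e_1+e_2+e_3\rangle$ and the plane $\{x_3=0\}$ twice. The resulting count (two versus one pair of coordinates with equal kernels) is exactly the paper's vanishing-subcode invariant. Read that way, your argument is the paper's.

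\textbf{The $n=4$ step closes by hand.} Put $c(x)=\#\{i:x\in\ker\lambda_i\}$. Weight preservation is $c_+=c_-$ on nonzero $x$, and $\sum_{x\ne0}c(x)=\ell+3p=n+2p$ forces the same number $p$ of plane columns on both sides. The seven plane indicators $\mathbf 1_P$ are linearly independent (the Fano incidence matrix is invertible). The cases then go as follows.
\begin{itemize}
\item $p\in\{0,4\}$: no second side exists, by that independence.
\item $p=3$: a single line remains, which violates \eqref{eq:selforth}.
\item $p=1$: the lines are the three points of a plane $Q$, so $c=\mathbf 1_Q+\mathbf 1_P$, and the only other side exchanges $Q$ and $P$.
\item $p=2$: the identity $2\cdot\mathbf 1_{\{v\}}=\sum_{P\ni v}\mathbf 1_P-\mathbf 1$ shows the only other side replaces the doubled line $\langle v\rangle$ by $\langle v'\rangle$, and the two planes through $v'$ avoiding $v$ by the two through $v$ avoiding $v'$. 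This is exactly the example of (i).
\end{itemize}
Both exchanges are induced by elements of $\GL(3,2)$, so no inequivalent pair exists at $n=4$.

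\textbf{The ``exactly $168$'' clause is unaddressed.} The same bookkeeping at $n=5$ gives exactly three families of inequivalent pairs:
\begin{itemize}
\item lines $\langle v\rangle,\langle v\rangle,\langle w\rangle,\langle w\rangle$ plus a plane, versus four distinct lines plus a plane (the $S_A/S_B$ type), giving $42$ weight functions;
\item the $p=3$ analogue of the $p=2$ exchange, with a common extra plane $R$ containing exactly one of $v,v'$, giving $84$;
\item the three lines of a plane $R$ with plane columns $R,S$, versus the lines of $S$ with $R,R$, giving $42$.
\end{itemize}
That makes $168$ weight functions in total, but they form only three $\GL(3,2)$-orbits. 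By my count, then, $168$ is the number before quotienting by $\GL(3,2)$. You should state what is being counted rather than try to reproduce $168$ ``up to the $\GL(3,2)$ action.''
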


\begin{proof}[Comments on the proof]
Existence and the stated properties are verified directly on the displayed generators (Section~\ref{sec:verification}). Non-extendability of $f$ in all cases also follows structurally from Lemmas~\ref{lem:orbit}--\ref{lem:kernels}: the multisets of coordinate kernels differ. For the inequivalence of the codes in (ii) and (iii), three independent arguments were used: an exhaustive check over all $168$ re-identifications of the message space by $\GL(3,2)$ at the level of multiplicity functions; raw brute force over all monomial maps (for the length-$5$ pair); and, for the length-$6$ pair, the following monomial invariant, checkable by hand: the number of unordered coordinate pairs $\{i,j\}$ whose vanishing subcodes $\{u\in C: u_i=0\}$ and $\{u\in C: u_j=0\}$ coincide equals $2$ for $S_+$ (coordinates $\{1,2\}$ and $\{3,4\}$) but $1$ for $S_-$ (coordinates $\{5,6\}$).
\end{proof}

Part (iii) shows in particular that the failure of the quantum extension property does not hinge on low-weight stabilizer elements: all nonidentity elements of the length-$6$ pair have weight $\ge 4$. (The distances of these $[[6,3]]$ codes are nonetheless $1$---their normalizers contain weight-one elements---which keeps the following question open in both directions; the same holds for the $[[5,2]]$ pair.)

For the pairs in Theorem~\ref{thm:fullsupport}(ii),(iii) the invariant $t$ vanishes on both sides, and we know of no invariant separating the two codespaces under general local unitaries:

\begin{question}[Extension analogue of LU--LC]\label{q:lulc}
Are the codespaces of $S_A$ and $S_B$ of Theorem~\ref{thm:fullsupport}(ii)---or those of $S_+$ and $S_-$ of Theorem~\ref{thm:fullsupport}(iii)---mapped to one another by some product unitary $U_1\otimes\cdots\otimes U_n$ composed with a qubit permutation? More generally: do there exist weight-isometric stabilizer codes that are locally unitarily equivalent but not locally Clifford equivalent---or is local-unitary equivalence of weight-isometric pairs always witnessed monomially?
\end{question}

A negative answer to the first part would strengthen Theorem~\ref{thm:fullsupport} to the codespace level, as Theorem~\ref{thm:family}(iii) does for the basic family; a positive answer would exhibit an LU--LC-type phenomenon \cite{Ji10,VandenNest05} arising from the extension problem. Either outcome seems interesting; compare the closely related problems posed in \cite[Section~6]{Pllaha20}.

\section{Discussion}\label{sec:discussion}

\subsection{The socle-theoretic reading}

In the language of the classical extension literature \cite{Wood08,Wood09,DinhLopez04,AssemJAA,Assem15}, the alphabet of quantum error correction is $\A_q=\F_q^2$ as a module over $R=\F_q$ (or $\F_p$): a semisimple module equal to its own socle, which is non-cyclic---indeed $2$-generated---so the extension property fails, and by Lemma~\ref{lem:transitive} it fails for the only weight the quantum symmetry group leaves available. What this note adds to the classical picture is that the failure persists, at the smallest possible scales, on the quadric of self-orthogonal codes, where all quantum codes live, with self-orthogonality obtainable for free (Lemma~\ref{lem:selforth}); and that the obstruction descends from labels to physical subspaces (Lemma~\ref{lem:idle}). Meanwhile the classical repair strategies are unavailable: annihilator weights collapse to the Hamming weight over a field base ring, and the cyclic-socle hypothesis of the positive theorems in \cite{Wood09,AssemJAA} is violated by every quantum alphabet, for every qudit dimension.

There is a pleasing irony here. Of MacWilliams' two great theorems, it is the identities---often regarded as the deeper analytic fact---that survive quantization perfectly \cite{ShorLaflamme97}, while the extension theorem---the ``rigidity'' statement---fails at three qubits, the smallest scale the classical obstruction theory permits, and our census shows the minimal failure is unique up to the obvious symmetries.

\subsection{Open problems}\label{sec:open}

\begin{enumerate}
\item \textbf{(Question~\ref{q:lulc}.)} Decide local-unitary equivalence for the full-support pairs of Theorem~\ref{thm:fullsupport}(ii),(iii), and more broadly whether QEP restricted to full-support codes fails at the codespace level. The $[[5,2]]$ pair is now the minimal open instance.
\item \textbf{Minimal length for general $q$.} Determine the minimal counterexample length for general $q$. For $q=2$ it is $3=q+1$ (Theorem~\ref{thm:census}); we expect $q+1$ to be minimal in general, in the light of the threshold results for additive isometries in \cite{Dyshko16,DyshkoGeom}, but we claim it only for $q=2$. Likewise, determine the minimal \emph{full-support} lengths for general $q$ (for $q=2$ they are $4$ and $5$ by Theorem~\ref{thm:fullsupport}).
\item \textbf{Distance and purity.} Is there a natural ``pure/nondegenerate'' hypothesis restoring the extension property---e.g., do weight-preserving isometries between stabilizer codes of distance $>2$ extend? This would be an extension-theorem analogue of Rains' automorphism theorem for linear codes \cite{Rains99}. Our counterexamples all have distance $\le 2$ or distance $1$, so the question is open; note however that stabilizer-element weights $\ge 4$ do not suffice to restore extension, by Theorem~\ref{thm:fullsupport}(iii).
\item \textbf{Full-support minimal length as a function of $k$.} Theorem~\ref{thm:fullsupport} settles $k\le 3$ (impossible for $k\le 2$; minimal lengths $4$ and $5$ for $k=3$). Determine the minimal full-support lengths for $k\ge 4$, and whether the gap between the minimal isometry-counterexample and the minimal inequivalent pair persists.
\item \textbf{Complete invariants.} Classify complete invariants: what data beyond the enumerators (and beyond $t$) determines a stabilizer code up to local Clifford $\rtimes$ permutation equivalence? The census of Theorem~\ref{thm:census} suggests such a classification is tractable at small lengths. The vanishing-subcode invariant used in Theorem~\ref{thm:fullsupport}(iii) is a first step beyond the enumerators.
\item \textbf{Structured subclasses.} Does the extension property hold for natural subclasses of stabilizer codes---CSS codes with CSS-preserving isometries, graph codes, or codes with transitive automorphism group? (For $\F_4$-linear codes and linear isometries it holds; Remark after Example~\ref{ex:q2}.)
\item \textbf{Density.} Asymptotically, how common is the failure? For instance, what fraction of weight-isometric pairs of self-orthogonal codes of length $n$ admits a non-extendable isometry, as $n\to\infty$? The census data ($486$ of the $\binom{514}{2}$ unordered pairs at $n=3$) gives a first data point.
\end{enumerate}

\section{Verification}\label{sec:verification}

Every finite claim in this note was verified by direct computation; the verification scripts are self-contained Python~3 with no dependencies beyond the standard library (plus \texttt{numpy} in one search script) and accompany this note as supplementary material: \texttt{verify\_main.py} (Example~\ref{ex:q2}: additivity, self-orthogonality, weights, non-extension by brute force over all $1296$ monomial maps, both enumerators, type profiles); \texttt{exhaust.py}/\texttt{exhaust2.py} (Theorem~\ref{thm:census}: enumeration of all self-orthogonal additive codes for $n\le 3$, all weight-preserving isometries, extension tests, orbit and census claims); \texttt{verify\_q3.py} (Theorem~\ref{thm:family} instantiated at $q=3$: $[[4,2]]_3$ codes, all properties); \texttt{fullsupport.py} and \texttt{verify\_fullsupport.py} (Lemma~\ref{lem:relations} search, enumerating all kernel multisets for each $n\le 6$ with no multiplicity cap, up to $\GL(3,2)$; Theorem~\ref{thm:fullsupport}, including the $31104$-map brute force at $n=4$, the $\GL(3,2)$ checks, the vanishing-subcode invariant, and the enumerators and distances of the $[[5,2]]$ and $[[6,3]]$ pairs); and \texttt{counterexample\_n5.py} (a self-contained verification of Theorem~\ref{thm:fullsupport}(ii), including the $933120$-map brute force).

\section*{Acknowledgements}


\end{document}